\title{Working paper\vspace{1cm}\\Analytic solutions in a continuous-time financial market model}
\author{
Zsolt Bihary\footnote{associate professor at the Department of Finance, Corvinus University of Budapest} \and
Attila A. Víg\footnote{PhD student at the Department of Finance, Corvinus University of Budapest.
\newline email: \texttt{attila.vig@uni-corvinus.hu}
\newline This research was supported by the Higher Education Institutional Excellence Program of the Ministry of Human Capacities in the framework of the `Financial and Public Services' research project (1783-3/2018/FEKUTSTRAT) at Corvinus University of Budapest.
}}
\theoremstyle{definition}
\newtheorem{definition}{Definition}
\theoremstyle{plain}
\newtheorem{proposition}{Proposition}
\begin{document}

\maketitle

\begin{abstract}
    We propose a heterogeneous agent market model (HAM) in continuous time.
    The market is populated by fundamental traders and chartists, who both use simple linear trading rules.
    Most of the related literature explores stability, price dynamics and profitability either within deterministic models or by simulation.
    Our novel formulation lends itself to analytic treatment even in the stochastic case.
    We prove conditions for the (stochastic) stability of the price process, and also for the price to mean-revert to the fundamental value.
    Assuming stability, we derive analytic formulae on how the population ratios influence price dynamics and the profitability of the strategies.
    Our results suggest that whichever trader type is more present in the market will achieve higher returns.
    
    Journal of Economic Literature (JEL) code: G11, G17.
\end{abstract}

\section{Introduction}


Heterogeneous agent models have enjoyed great success in explaining financial market phenomena such as fat tails in return distribution, long-range dependence in volatility, and time series momentum and reversal.
\textcite{brock1998heterogeneous} were among the first to have shown how boundedly rational agents with heterogeneous beliefs can cause market instability when placed into an evolutionary context.
Their discrete time model gave rise to a wide range of models with interacting agents using different expectation schemes (see for example \textcite{hommes2005robust}, \textcite{chiarella2006asset} and \textcite{chiarella2013evolutionary}).

Building on the discrete time framework, a number of continuous-time models have been proposed recently.
Continuous-time models are advantageous when certain delayed variables take part of the system, and they offer more hope for analytic results.
\textcite{he2009market} examine the effect of different lag lengths of moving average rules technical analysts or ``chartists'' use as forecast.
\textcite{he2010dynamics} find that an increase in memory length by the chartists not only can destabilize the market price, but can also stabilize an otherwise unstable market price, leading to stability switching as the memory length increases.
\textcite{he2012heterogeneous} show that with the presence of adaptive heterogeneous agents, a rational route to market instability arises, and the adaptive switching behavior of agents can increase market price fluctuations.
\textcite{he2015profitability} find that the performance of momentum strategy is determined by both time horizon and the market dominance of momentum traders.

A recurring theme in the papers cited is that market stability is examined on the ``deterministic skeleton'' of the otherwise stochastic model, while profitability of the strategies is evaluated by numerical simulations.
In this paper we handle both areas of interest analytically in a stochastic setting.
We propose a simple linear stochastic model with heterogeneous agents in continuous time.
The interacting agents are the usual fundamental traders and chartists.
Market price is arrived at by a market maker scenario.
We define market stability as the asymptotic stationarity of certain well-suited processes, and prove conditions for the market to be stable.
Given stability, we show how the strategic parameters and population ratios of the traders affect the asymptotic stochastic behavior of the market.
We also give analytic formulae for the appropriately defined profitability of the two strategies -- still in a stochastic setting.
Our preliminary results suggest that whichever trader is more present on the market, will achieve higher returns in the long run.

\section{Model}

In this section, we establish an asset pricing model with heterogeneous agents in continuous time.
The approach we take is similar to that of
\textcite{he2010dynamics} and \textcite{he2012heterogeneous}, among others, and is standard in the heterogeneous agent modeling literature.
We incorporate investor heterogeneity and bounded rationality, and examine their effects to the market price behavior.

The model consists of four agents: \emph{fundamental traders}, who take positions according to the fundamental value; \emph{chartists}, who take positions according to time series momentum; \emph{noise traders}, who provide an additional stochastic component in the model; and a \emph{market maker}, who gathers the positions of the traders and adjusts the price accordingly.
The fundamental traders and the chartists are interacting agents, while the noise traders and the market maker are modeled as latent agents.

The financial market we consider consists of a single risky asset (that we think of as the stock market index), and a single risk-free asset (that we think of as the money market account).
The interacting agents build a portfolio of these two assets.
We focus on the market price dynamics itself, and thus we keep the agents' utility maximization problem in the background, and model their portfolio choice directly.

\subsection{Fundamental traders}

Let \(\left(\Omega, \mathcal{A}, P\right)\) be a probability space, let \(W:[0,\infty)\times\Omega\mapsto\mathbb{R}^{2\times 1}\) be a two dimensional Wiener process with independent components representing two different (fundamental and noise trader) risk factors of our model, and let \(\left(\mathcal{F}_t\right)_{t\geq 0}\) denote the natural filtration of \(W\).
All the processes described in the paper are adapted to this filtration.

Let \(S_t\) and \(F_t\) denote the (cum dividend) market price and fundamental value of the risky asset at time \(t\geq 0\) respectively, with \(s_t=\log{S_t}\) and \(f_t=\log{F_t}\) their logarithm.
The fundamental value follows a standard geometric Brownian (GBM) process with drift \(\mu>0\) and volatility \(\sigma_f\in\mathbb{R}^{1\times 2}\), and thus its logarithm follows:
\begin{equation}
df_t=\left(\mu-\frac{\sigma_f\sigma_f^\top}{2}\right)\,dt+\sigma_f\,dW_t \label{eq:fundamental}
\end{equation}
where \(x^\top\) denotes the transpose of \(x\).
We denote by \(u_t\) the price dislocation: the log-difference of the market price and the fundamental value: \(u_t=s_t-f_t\).

We assume that fundamental traders are able to estimate the fundamental value and believe that the market price is mean-reverting to this fundamental value.
Accordingly, the greater (lower) the price dislocation \(u_t\) is, the lower (higher) proportion of their wealth they invest in the risky asset.
Thus, the proportion of their wealth they invest in the risky asset -- portfolio choice for short -- \(Z^f(u_t)\) depends on price dislocation \(u_t\) with \(\frac{\partial Z^f}{\partial u_t}<0\), and we assume a simple affine function similar to \textcite{he2015profitability}:
\begin{equation}
Z^f_t=Z-\alpha_fu_t, \label{eq:portfolio_fundamental}
\end{equation}
where \(\alpha_f\geq 0\) is the fundamental traders' sensitivity parameter for price dislocation and \(Z>0\) is the natural (positive) mean they invest in the risky asset. We assume a positive mean because we assume that investors (both fundamental traders and chartists) are aware of the fundamental value's positive drift.
We note here that fundamental traders play an inherently mean-reverting role in this market: the stronger they are, the more closely the market price follows the fundamental value.

\subsection{Chartists}

Chartists do not know the fundamental value.
Instead, they believe that they can estimate future price movements from past performance of the risky asset.
We assume chartists to be trend followers (momentum traders): they extrapolate recent price movements, and believe that this recent trend will continue.
We define momentum slightly differently than \textcite{chiarella2006asset} and \textcite{he2018asset}.
First, we define \(ds'_t\) as the change of log market price \(s_t\) compared to the average change of log fundamental value \(f_t\):
\[
ds'_t=ds_t-\left(\mu-\frac{\sigma_f\sigma_f^\top}{2}\right)dt
\]
We define momentum \(m_t\) through \(ds'_t\):
\begin{equation}
m_t=\int_{t-\tau}^te^{-k(t-v)}\,ds'_v, \label{eq:momentum_definition}
\end{equation}
with look-back horizon \(\tau>0\) and decay rate \(k>0\).
In words, momentum is defined as the exponentially weighted mean of past log price movements (compared to the average log fundamental value movement) with weights decreasing the further we go back in time.
Using Leibniz's integral rule, \eqref{eq:momentum_definition} can be expressed as a stochastic delay differential equation with time delay \(\tau\):
\begin{equation}
dm_t=ds'_t-e^{-k\tau}\,ds'_{t-\tau}-km_t\,dt. \label{eq:momentum_delay_difference}
\end{equation}
We will devote particular attention to the case \(\tau\to\infty\), in which case \(e^{-k\tau}\to 0\), and \eqref{eq:momentum_delay_difference} becomes a standard stochastic differential equation:
\begin{equation}
dm_t=-km_t\,dt+ds'_t=-km_t\,dt-\left(\mu-\frac{\sigma_f\sigma_f^\top}{2}\right)\,dt+ds_t \label{eq:momentum_difference}
\end{equation}
The \(\tau\to\infty\) case is interesting because \eqref{eq:momentum_difference} is much easier to handle mathematically than \eqref{eq:momentum_delay_difference}, while it is important to note that \(k\) fulfills a similar role to \(\tau\) itself: the relevant history taken into account has a magnitude of \(\frac{1}{k}\).
In this case we can also interpret momentum as the \(\frac{1}{k}\) year log-return.

Chartists believe that recent price movements can be extrapolated: the greater (lower) momentum is, the greater (lower) proportion of their wealth they invest in the risky asset.
Their portfolio choice \(Z^c\left(m_t\right)\) depends on momentum with \(\frac{\partial Z^c}{\partial m_t}>0\) and we assume a simple affine function again:
\begin{equation}
Z^c_t=Z+\alpha_cm_t, \label{eq:portfolio_chartist}
\end{equation}
where \(Z>0\) is the same as in \eqref{eq:portfolio_fundamental}, and \(\alpha_c\geq 0\) is the chartists' sensitivity parameter to momentum.
We note here that chartists play an inherently explosive (with respect to the price dynamics) role in the heterogeneous agent modeling literature.
This role is often alleviated with the use of bounded demand functions.
We assume a simple affine form in \eqref{eq:portfolio_chartist} as opposed to the increasing and bounded S-shaped graph used by \textcite{chiarella2006asset}, because our goal is a linear stochastic model that can be solved analytically.

Our linear model is an approximation at the origin of the S-shaped graph used by the literature.
By this assumption we lose some interesting dynamical aspects in the deterministic case (\(\sigma_f=\sigma_n=0\)), but these interesting aspects are mostly washed away by stochasticity, which we never turn off in this paper.

\subsection{Market maker, noise traders}

Market price is adjusted by the market maker in a similar fashion to \textcite{beja1980dynamic} and \textcite{chiarella2006asset}.
The market maker aggregates the excess demand of fundamental traders and chartists, and adjusts the market price in order to reduce this excess demand while also accounting for the average increase in the fundamental value.
Assume \(p_f\in[0,1]\) and \(p_c\in[0,1]\) with \(p_f+p_c=1\) the ratio of fundamental traders and chartists active on the market, respectively.
The aggregate demand of fundamental traders and chartists is given by
\begin{equation}
\overline{Z}_t=p_fZ^f_t+p_cZ^c_t, \label{eq:aggregate_demand}
\end{equation}
and thus the price setting rule of the market maker is
\begin{equation}
ds_t=\left(\mu-\frac{\sigma_f\sigma_f^\top}{2}\right)\,dt+\beta\left(\overline{Z}_t-Z\right)\,dt, \label{eq:price_setting_rule}
\end{equation}
where \(\beta\geq 0\) is the parameter controlling the speed of price adjustments by the market maker.
We can also think about \(\beta\) as a parameter that determines the strength of supply-demand pressures on the market price. 

We also assume the presence of noise traders or liquidity traders.
We model their perfectly random excess demand through their price impact directly, and complete \eqref{eq:price_setting_rule} with an additive stochastic component to arrive at the final (log) market price dynamics:
\begin{equation}
ds_t=\left(\mu-\frac{\sigma_f\sigma_f^\top}{2}\right)\,dt+\beta\left(\overline{Z}_t-Z\right)\,dt+\sigma_n\,dW_t, \label{eq:price_dynamics}
\end{equation}
where \(\sigma_n\in\mathbb{R}^{1\times 2}\) and \(W_t\) is the same two dimensional Wiener process as in \eqref{eq:fundamental}.

\subsection{Market stability and price dynamics}

Since price dislocation \(u_t=s_t-f_t\) is additive, its change is simply \(du_t=ds_t-df_t\), so we can combine \eqref{eq:fundamental}, \eqref{eq:portfolio_fundamental}, \eqref{eq:portfolio_chartist}, \eqref{eq:aggregate_demand} and \eqref{eq:price_dynamics} to arrive at its dynamics:
\begin{equation}
du_t=\beta\left(p_c\alpha_cm_t-p_f\alpha_fu_t\right)\,dt+\left(\sigma_n-\sigma_f\right)\,dW_t \label{eq:price_dislocation_dynamics}
\end{equation}
The reason we prefer working with \(u_t\) instead of \(s_t\) itself is that the dynamics of \(u_t\) and \(f_t\) together contain all stochastic information about \(s_t\).
Since \(u_t\) and \(f_t\) do not have any explicit connection between them, they can be examined separately.
Also, since \(f_t\) is a random walk process with no stationary distribution, \(s_t\) cannot have a stationary distribution either, whereas \(u_t\) might indeed: if we for example fix \(p_c=0\) in \eqref{eq:price_dislocation_dynamics} (which corresponds to a market with no chartists), \(u_t\) becomes a standard Ornstein-Uhlenbeck process with well-known asymptotically stationary normal distribution.

We can also combine \eqref{eq:momentum_difference} and \eqref{eq:price_dislocation_dynamics} and use the fact \(ds_t=du_t+df_t\) to arrive at the dynamics of \(m_t\) with only \(m_t\) and \(u_t\) on the right hand side:
\begin{equation}
dm_t=\left(\beta p_c\alpha_c-k\right)m_t\,dt-\beta p_f\alpha_fu_t\,dt+\sigma_n\,dW_t \label{eq:momentum_dynamics}
\end{equation}

The full dynamics of our financial market can be expressed as a three-dimensional stochastic differential equation system:
\begin{multline}
d
\begin{pmatrix}
f\\u\\m
\end{pmatrix}_t=
\begin{pmatrix}
\mu-\frac{\sigma_f\sigma_f^\top}{2}\\0\\0
\end{pmatrix}\,dt+
\\
+
\begin{pmatrix}
0& 0& 0\\
0& -\beta p_f\alpha_f& \beta p_c\alpha_c\\
0& -\beta p_f\alpha_f& \beta p_c\alpha_c-k
\end{pmatrix}
\begin{pmatrix}
f_t\\u_t\\m_t
\end{pmatrix}\,dt+
\begin{pmatrix}
\sigma_f\\\sigma_n-\sigma_f\\\sigma_n
\end{pmatrix}\,dW_t \label{eq:full_dynamics}
\end{multline}
We can notice a clear disconnection in \eqref{eq:full_dynamics} between \(f_t\) and \(\left(u_t,m_t\right)^\top\).
Indeed, we examine the latter separately: let \(X_t=\left(u_t,m_t\right)^\top\), and
\begin{equation}
    dX_t=-\Theta X_t\,dt+\Sigma\,dW_t,\quad\text{where} \label{eq:u_m_dynamics}
\end{equation}
\[
\Theta=
    \begin{pmatrix}
    \beta p_f\alpha_f& -\beta p_c\alpha_c\\
    \beta p_f\alpha_f& k-\beta p_c\alpha_c
    \end{pmatrix}
    \in\mathbb{R}^{2\times 2}
\quad\text{and}\quad\Sigma=
    \begin{pmatrix}
    \sigma_n-\sigma_f\\
    \sigma_n
    \end{pmatrix}
    \in\mathbb{R}^{2\times 2}.
\]

\(X_t\) is a two dimensional Ornstein-Uhlenbeck process with no constant drift term, driven by a two dimensional Wiener process with independent components.
Let us denote by \(\sigma_u\) the first row of the volatility matrix, i.e. \(\sigma_u=\sigma_n-\sigma_f\), and also let us introduce \(q_i=\beta p_i\alpha_i\), \(i\in\{f,c\}\).

\begin{definition}
We say that the market defined by \eqref{eq:full_dynamics} is \emph{stable} if its component \(X_t=\left(u_t,m_t\right)^\top\) is (asymptotically) stationary.
\end{definition}
Asymptotic stationarity of the system \(X_t\) implies that neither the component \(u_t\) nor \(m_t\) ``blows up'' -- log market price \(s_t\) does not wander too far from the log fundamental value, thus price dislocation and momentum are ergodic.
We interpret this qualitative property of the market as \emph{stability}.

\begin{proposition}
Conditions
\begin{align}
    \beta\left(p_c\alpha_c-p_f\alpha_f\right)&<k \label{eq:condition1} \tag{C1}
    \\
    \beta p_f\alpha_f&>0 \label{eq:condition2} \tag{C2}
\end{align}
are necessary and sufficient for the market defined by \eqref{eq:full_dynamics} to be stable. \label{prop:market_stable}
\end{proposition}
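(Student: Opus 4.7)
The plan is to recognize that $X_t$ is a two-dimensional Ornstein-Uhlenbeck process with drift matrix $-\Theta$ and constant diffusion matrix $\Sigma$, and to invoke the standard characterization of asymptotic stationarity for such processes: the solution of $dX_t = -\Theta X_t\,dt + \Sigma\,dW_t$ is given explicitly by
\[
X_t = e^{-\Theta t}X_0 + \int_0^t e^{-\Theta(t-s)}\Sigma\,dW_s,
\]
and this process admits a (unique, Gaussian) asymptotic stationary distribution with covariance matrix solving the Lyapunov equation $\Theta V + V\Theta^\top = \Sigma\Sigma^\top$ if and only if every eigenvalue of $\Theta$ has strictly positive real part, i.e.\ $-\Theta$ is Hurwitz stable. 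Necessity also follows from this representation, since if $\Theta$ has an eigenvalue with non-positive real part then either $e^{-\Theta t}X_0$ does not decay or the integral term has variance growing without bound along the corresponding eigendirection, so no stationary distribution exists.

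Next I would reduce the spectral condition to a pair of elementary inequalities on the entries of $\Theta$. For a $2\times 2$ real matrix, both eigenvalues have strictly positive real parts if and only if $\operatorname{tr}(\Theta) > 0$ and $\det(\Theta) > 0$ (this is the Routh-Hurwitz criterion in dimension two; it follows from the characteristic polynomial $\lambda^2 - \operatorname{tr}(\Theta)\lambda + \det(\Theta) = 0$, since the roots sum to $\operatorname{tr}(\Theta)$ and multiply to $\det(\Theta)$).

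Finally, a direct computation on
\[
\Theta = \begin{pmatrix} q_f & -q_c \\ q_f & k - q_c \end{pmatrix}
\]
gives $\operatorname{tr}(\Theta) = k + q_f - q_c = k - \beta(p_c\alpha_c - p_f\alpha_f)$ and $\det(\Theta) = q_f(k-q_c) + q_f q_c = k\,q_f = k\,\beta p_f \alpha_f$. Since $k>0$ by assumption, $\operatorname{tr}(\Theta)>0$ is exactly condition \eqref{eq:condition1} and $\det(\Theta)>0$ is exactly condition \eqref{eq:condition2}. Combining the three steps yields the proposition.

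I do not expect a real obstacle here: the only substantive ingredient is the classical stationarity result for multidimensional OU processes, and the reduction to the two scalar conditions is a single line of linear algebra. The one thing to be mildly careful about is the necessity direction — one should note that zero or purely imaginary eigenvalues of $-\Theta$ prevent stationarity because of the $\Sigma\,dW$ noise (the integral term's covariance diverges), which rules out the boundary cases $\operatorname{tr}(\Theta)=0$ or $\det(\Theta)=0$ and justifies the strict inequalities in \eqref{eq:condition1}--\eqref{eq:condition2}.
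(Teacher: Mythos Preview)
Your proposal is correct and follows essentially the same route as the paper: write the explicit OU solution, reduce asymptotic stationarity to the spectral condition that all eigenvalues of $\Theta$ have strictly positive real part, and then translate this into the two scalar inequalities. Your version is in fact more explicit than the paper's, since you spell out the trace/determinant (Routh--Hurwitz) criterion and compute $\operatorname{tr}(\Theta)=k+q_f-q_c$ and $\det(\Theta)=kq_f$, whereas the paper simply asserts the final inequalities.
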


\begin{proof}
\(X_t\) defined by \eqref{eq:u_m_dynamics} has the solution\footnote{Here \(e^{A}\coloneqq\sum_{k=0}^\infty\frac{A^k}{k!}\), and it holds that \(\left(e^{At}\right)'=Ae^{At}\).}
\begin{equation}
X_t=e^{-\Theta t}X_0+\int_0^te^{\Theta(t-v)}\Sigma\,dW_v, \label{eq:x_solution}
\end{equation}
which is asymptotically stationary if and only if -- according to \parencite{shreve2004stochastic} -- the ordinary differential equation system
\[
\Dot{x}=-\Theta x
\]
is strongly stable, which requires that the real part of all the eigenvalues of \(\Theta\) are strictly positive.
This holds if \(p_c\alpha_c-p_f\alpha_f<\frac{k}{\beta}\) and \(\beta p_f\alpha_fk>0\) both hold.
\end{proof}


We interpret Proposition \ref{prop:market_stable} through price dislocation \(u_t\).
Condition \eqref{eq:condition1} says that market stability requires that the \emph{relative strength} of the chartists compared to that of the fundamental traders \(\left(p_c\alpha_c-p_f\alpha_f\right)\) be not too great.
Here we define \emph{relative strength} as the product of the traders' population ratio \(p_i\) and the sensitivity parameter \(\alpha_i\) of their strategy.
All else being equal, the greater the relative strength of fundamental traders (chartists) is, the stronger their mean-reverting (explosive) price impact is felt on the market.
While chartists are \emph{allowed} to be stronger than fundamental traders, they cannot be \emph{too strong} for the market to be stable.

The maximum difference of relative strength is \(\frac{k}{\beta}\).
The greater the market maker's parameter \(\beta\) for the speed of price adjustment is, the less powerful chartists are allowed to be compared to the fundamental traders: If \(\beta\) is high, chartists' extrapolative effect is more pronounced on the market, and they ``blow up'' the market price more easily.

The inverse is true for decay rate \(k\), which controls the relevant time horizon \(\frac{1}{k}\) when calculating momentum.
The higher \(k\) is, the shorter the history of prices is that is \emph{actually} driving momentum.
When \(k\) is very high, momentum is driven mostly by the perfectly random effect of the noise traders, and chartists are not able to latch onto the ever increasing (or decreasing) market price.

Condition \eqref{eq:condition2} says that the relative strength of the fundamentalists, the market maker's parameter and the decay rate all have to be positive.
This condition says that the only way for the log market price to have any connection to the log fundamental value is that the fundamental traders have positive relative strength on the market.
Also, with no price effect \((\beta=0)\), fundamental traders have no way of pulling the market price towards its fundamental value, no matter how strong they may be.

\begin{proposition}
Assume that the market defined by \eqref{eq:full_dynamics} is stable.
Then, \eqref{eq:x_solution} has a multivariate normal limiting distribution with
\begin{align*}
    \lim_{t\to\infty}E\left[X_t\right]&\overset{.}{=}E\left[X_\infty\right]=
    \begin{pmatrix}
    0\\0
    \end{pmatrix},
    \\
    D^2\left[u_\infty\right]\overset{.}{=}E\left[u^2_\infty\right]
    &=\frac{\left(k\sigma_u+q_c\sigma_f\right)\left(k\sigma_u+q_c\sigma_f\right)^\top+q_fk\sigma_u\sigma_u^\top}{2q_fk\left(q_f+k-q_c\right)},
    \\
    COV\left[u_\infty,m_\infty\right]\overset{.}{=}E\left[u_\infty m_\infty\right]
    &=\frac{\left(q_c-k\right)\sigma_f\sigma_f^\top+k\sigma_n\sigma_n^\top}{2k\left(q_f+k-q_c\right)},
    \\
    D^2\left[m_\infty\right]\overset{.}{=}E\left[m^2_\infty\right]
    &=\frac{q_f\sigma_f\sigma_f^\top+k\sigma_n\sigma_n^\top}{2k\left(q_f+k-q_c\right)},\quad\text{where}\quad\sigma_u=\sigma_n-\sigma_f.
\end{align*}
\label{prop:asymp_distr}
\end{proposition}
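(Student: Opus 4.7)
The plan is to exploit the fact that $X_t$, being a linear functional of the Wiener process $W$, is Gaussian for every $t$, so the limiting distribution (if it exists) must also be Gaussian, and therefore determined entirely by its mean vector and covariance matrix. The task then reduces to (a) showing that these two moments converge, and (b) computing the limits and matching them against the claimed formulae.

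First I would handle the mean. From \eqref{eq:x_solution}, $E[X_t]=e^{-\Theta t}X_0$, and by the stability hypothesis every eigenvalue of $\Theta$ has strictly positive real part, so $\|e^{-\Theta t}\|\to 0$ and $E[X_t]\to 0$. Next, for the covariance, the It\^o isometry gives
\[
V_t \;\coloneqq\; \mathrm{Cov}(X_t) \;=\; \int_0^t e^{-\Theta(t-v)}\,\Sigma\Sigma^\top\, e^{-\Theta^\top(t-v)}\,dv,
\]
and a substitution $s=t-v$ plus the exponential decay of $e^{-\Theta s}$ show that $V_t$ converges to a finite limit $V_\infty$. Differentiating $V_t$ in $t$ (or, equivalently, applying It\^o's formula to $X_tX_t^\top$ and taking expectations) yields $\dot V_t=-\Theta V_t-V_t\Theta^\top+\Sigma\Sigma^\top$; sending $t\to\infty$ forces $\dot V_t\to 0$, so $V_\infty$ satisfies the Lyapunov equation
\[
\Theta V_\infty+V_\infty\Theta^\top \;=\; \Sigma\Sigma^\top.
\]

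The remaining step is to solve this $2\times 2$ Lyapunov equation explicitly. Writing $V_\infty=\bigl(\begin{smallmatrix}a&b\\ b&c\end{smallmatrix}\bigr)$ and using the structure of $\Theta$ and $\Sigma\Sigma^\top$ (where the right-hand side has entries $\sigma_u\sigma_u^\top$, $\sigma_u\sigma_n^\top$ and $\sigma_n\sigma_n^\top$, with $\sigma_u=\sigma_n-\sigma_f$), the symmetry of $V_\infty$ reduces the matrix identity to the three scalar equations
\begin{align*}
2\bigl(q_f a-q_c b\bigr) &= \sigma_u\sigma_u^\top,\\
q_f a+(q_f+k-q_c)b-q_c c &= \sigma_u\sigma_n^\top,\\
2\bigl(q_f b+(k-q_c)c\bigr) &= \sigma_n\sigma_n^\top.
\end{align*}
This is a linear system in $(a,b,c)$ whose determinant is $4q_fk(q_f+k-q_c)$, positive by the stability conditions \eqref{eq:condition1}--\eqref{eq:condition2}. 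Solving (for instance, eliminate $a$ and $c$ using the first and third equations, substitute into the second to read off $b$, then back-substitute) and rewriting the right-hand sides using $\sigma_u=\sigma_n-\sigma_f$ yields the three closed-form expressions in the statement.

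I expect the only real obstacle to be bookkeeping in the last step: the three $\sigma\sigma^\top$-type terms have to be regrouped from the $\sigma_u$-formulation into the mixed $\sigma_f,\sigma_n$ form appearing in the proposition, and in particular the identity $(k\sigma_u+q_c\sigma_f)(k\sigma_u+q_c\sigma_f)^\top+q_fk\sigma_u\sigma_u^\top$ for the numerator of $D^2[u_\infty]$ arises only after expanding and collecting carefully. Everything else (convergence of $V_t$, derivation of the Lyapunov equation, Gaussianity of the limit) is standard for multidimensional Ornstein--Uhlenbeck processes.
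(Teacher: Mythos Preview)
Your proposal is correct and follows essentially the same route as the paper: Gaussianity of the Ornstein--Uhlenbeck solution, $E[X_t]=e^{-\Theta t}X_0\to 0$ by stability, derivation of the Lyapunov equation $\Theta V_\infty+V_\infty\Theta^\top=\Sigma\Sigma^\top$ for the stationary covariance (the paper obtains it by differentiating $E[X_tX_t^\top]$ and setting $\dot\rho=0$), and then solving the resulting algebraic system. Your write-up is in fact more explicit than the paper's, which stops at ``solving the resulting system of algebraic equations'' without displaying the three scalar equations or the determinant $4q_fk(q_f+k-q_c)$.
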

\begin{proof}
\eqref{eq:x_solution} is a simple stochastic integral, thus it is normally distributed for all \(t\).
The expectation of the stochastic integral component is 0 for all \(t\), thus
\[
\lim_{t\to\infty}E\left[X_t\right]=\lim_{t\to\infty}e^{-\Theta t}X_0=
\begin{pmatrix}
0\\0
\end{pmatrix}
\]
if the eigenvalues of \(\Theta\) are positive, which holds if \(p_c\alpha_c-p_f\alpha_f<\frac{k}{\beta}\) and \(\beta p_f\alpha_f>0\).

Now let \(\rho:[0,\infty)\mapsto \mathbb{R}^{2\times 2}\), \(\rho_t=COV\left[X_t\right]=E\left[X_tX_t^\top\right]-E\left[X_t\right]E\left[X_t^\top\right]\) and let \(t\) be large enough so that \(\rho_t\approx E\left[X_tX_t^\top\right]\).
Then
\begin{align*}
    d\rho_t
    &=dE\left[X_tX_t^\top\right]=E\left[d\left(X_tX_t^\top\right)\right]=
    \\
    &=E\left[\left(dX_t\right)X_t^\top\right]+E\left[X_t\left(dX_t^\top\right)\right]+E\left[dX_tdX_t^\top\right]=
    \\
    &=-\Theta E\left[X_tX_t^\top\right]\,dt-E\left[X_tX_t^\top\right]\Theta^\top\,dt+\Sigma\Sigma^\top\,dt\Longrightarrow
    \\
    \Longrightarrow \Dot{\rho}&=-\Theta\rho-\rho\Theta^\top+\Sigma\Sigma^\top
\end{align*}
The covariance matrix of the asymptotic distribution is got by setting \(\Dot{\rho}=0\), and solving the resulting system of algebraic equations.
\end{proof}
We interpret Proposition \ref{prop:asymp_distr} by examining some extreme cases.
By setting \(q_c=0\) we arrive at the dynamics where chartists have no effect -- this might happen when their population \(p_c\) is zero and/or their sensitivity to momentum \(\alpha_c\) is zero.
In this case price dislocation \(u_t\) is a simple one dimensional Ornstein-Uhlenbeck process with asymptotic variance \(E\left[u^2_\infty\right]=\frac{\sigma_u\sigma_u^\top}{2q_f}\).

The stronger fundamental traders are, the lower the asymptotic variance of \(u_t\) is -- they keep the market price closer to its fundamental value in the long run.
The inverse is true for the chartists: \(q_c\) appears in the denominator with a negative sign and in the numerator with a positive sign, thus the stronger they are, the greater the variance of \(u_t\) is.

Now assume as a benchmark model a market where noise traders are so abundant that neither fundamental traders nor chartists have any effect on market price.
We can implement this in our model by setting \(\beta=0\) in \eqref{eq:price_dynamics}.
The dynamics of momentum \eqref{eq:momentum_dynamics} in this case is again a simple one dimensional Ornstein-Uhlenbeck process with \(E\left[m^*_\infty\right]=0\) and \(E\left[\left(m^*\right)^2_\infty\right]=\frac{\sigma_n\sigma_n^\top}{2k}\).
We will consider this value as the base variance of momentum (or the base variance of the \(\frac{1}{k}\) year log return).
Let us compare this base value to the variance of momentum of the full model:
\begin{equation}
\frac{E\left[m^2_\infty\right]}{E\left[\left(m^*\right)^2_\infty\right]}=\frac{q_f}{q_f+k-q_c}\frac{\sigma_f\sigma_f^\top}{\sigma_n\sigma_n^\top}+\frac{k}{q_f+k-q_c} \label{eq:momentum_variance}
\end{equation}
For our interpretation we will assume that the volatility of fundamental value is lower than that of the market price: \(\sigma_f\sigma_f^\top<\sigma_n\sigma_n^\top\) (see for example \textcite{shiller1981stock} who shows that market price variance is a magnitude greater than the variance of fundamental value derived as the present value of actual future dividends).
Now assume no chartists on the market, i.e. \(q_c=0\).
In this case \eqref{eq:momentum_variance} is the convex combination of a number lower than one and one, thus it is lower than one.
This means that the presence of fundamental traders reduces the variance of \(\frac{1}{k}\) year log returns.
With \(q_c>0\) the ratio above increases -- indeed, it can even rise above one.
This implies that the presence of chartists increases the variance of \(\frac{1}{k}\) year log returns.

\subsection{Profitability of the strategies}

We assume that both the fundamental traders and the chartists build a self-financing portfolio of the stock \(S_t\) and some money-market account \(B_t\).
We denote by \(V^i_t\), \(i\in\{f,c\}\) the value of their portfolio at time \(t\).
Without loss of generality, we assume a zero risk-free rate: \(\frac{dB_t}{B_t}=0\).
Since both traders hold \(Z^i_t\), \(i\in\{f,c\}\) proportion of their wealth in the risky asset, we arrive at the dynamics of their portfolio by using the self-financing assumption:
\begin{equation}
dV^i_t=Z^i_tV^i_t\,\frac{dS_t}{S_t}+\left(1-Z^i_t\right)V^i_t\,\frac{dB_t}{B_t}=Z^i_tV^i_t\,\frac{dS_t}{S_t} \label{eq:self_financing}
\end{equation}

We assume a logarithmic utility function of their wealth for both the fundamental traders and chartists.
We are interested in the differences in utility over a very long time horizon, where the system \((u_t,m_t)^\top\) has already reached stationarity.
Thus we use an appropriately normalized utility function:
\begin{equation}
\Pi_i=\lim_{T\to\infty}E\left[\frac{\log{V^i_T}-\log{V^i_0}}{T}\right], \label{eq:utility}
\end{equation}
which from now on we will interpret as the average logarithmic growth of the value of the agents' portfolio over an infinite time horizon.
Since we know the asymptotic distribution of \(\left(u_t,m_t\right)^\top\), \eqref{eq:utility} provides analytically tractable results.

\begin{proposition}
Assume the market defined by \eqref{eq:full_dynamics} is stable.
Assume the investors' wealth evolves according to \eqref{eq:self_financing}.
Then the average logarithmic growth \eqref{eq:utility} of the traders' wealth is:
\begin{align*}
    \Pi_f&=C+\beta p_f\alpha_f^2E\left[u^2_\infty\right]-\beta p_c\alpha_c\alpha_fE\left[u_\infty m_\infty\right]-\alpha_f^2\frac{\sigma_n\sigma_n^\top}{2}E\left[u^2_\infty\right]
    \\
    \Pi_c&=C+\beta p_c\alpha_c^2E\left[m^2_\infty\right]-\beta p_f\alpha_f\alpha_cE\left[u_\infty m_\infty\right]-\alpha_c^2\frac{\sigma_n\sigma_n^\top}{2}E\left[m^2_\infty\right]
    \\
    \text{where}\quad C&=Z\left(\mu-\frac{\sigma_f\sigma_f^\top}{2}+(1-Z)\frac{\sigma_n\sigma_n^\top}{2}\right)
\end{align*} \label{prop:utilities}
\end{proposition}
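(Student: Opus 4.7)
The natural route is to apply Itô's lemma to $\log V^i_t$, read off the integrand of the log-growth, and then let $T\to\infty$, replacing time-averages by stationary expectations obtained from Proposition \ref{prop:asymp_distr}.

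First I would rewrite \eqref{eq:self_financing} in logarithmic form. Since $dS_t/S_t = ds_t + \tfrac{1}{2}\sigma_n\sigma_n^\top\,dt$ and $(dS_t/S_t)^2=\sigma_n\sigma_n^\top\,dt$, Itô's formula gives
\[
d\log V^i_t \;=\; Z^i_t\,ds_t \;+\; \tfrac{1}{2}Z^i_t\bigl(1-Z^i_t\bigr)\sigma_n\sigma_n^\top\,dt.
\]
Substituting the market-price dynamics \eqref{eq:price_dynamics} and using $\bar Z_t-Z = -p_f\alpha_f u_t+p_c\alpha_c m_t$, the drift of $d\log V^i_t$ becomes
\[
Z^i_t\!\left(\mu-\tfrac{\sigma_f\sigma_f^\top}{2}\right) + \beta\,Z^i_t\bigl(-p_f\alpha_f u_t+p_c\alpha_c m_t\bigr) + \tfrac{1}{2}Z^i_t(1-Z^i_t)\sigma_n\sigma_n^\top,
\]
while the martingale part is $Z^i_t\sigma_n\,dW_t$.

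Next I would integrate from $0$ to $T$, divide by $T$, take expectations, and let $T\to\infty$. Under stability, Proposition \ref{prop:market_stable} ensures the OU process $X_t=(u_t,m_t)^\top$ is geometrically ergodic, so $Z^i_t$ and its relevant polynomials in $(u_t,m_t)$ have bounded second moments; hence the martingale term contributes $O(1/\sqrt T)$ and vanishes in the limit, and the time averages of the drift converge to expectations under the stationary distribution supplied by Proposition \ref{prop:asymp_distr}. Then I would substitute $Z^f_t=Z-\alpha_f u_t$ (respectively $Z^c_t=Z+\alpha_c m_t$) into each piece and expand. The linear-in-$u_\infty$ or linear-in-$m_\infty$ terms drop out because $E[u_\infty]=E[m_\infty]=0$. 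The quadratic term $Z^i_t(1-Z^i_t)$ collapses to $Z(1-Z)-\alpha_i^2 E[\cdot^2_\infty]$, while the cross product $Z^i_t(\bar Z_t-Z)$ yields precisely $p_i\alpha_i^2 E[\cdot^2_\infty]-p_j\alpha_j\alpha_i\,E[u_\infty m_\infty]$ (with $j\neq i$). Collecting terms with the constant $C=Z(\mu-\tfrac{\sigma_f\sigma_f^\top}{2}+(1-Z)\tfrac{\sigma_n\sigma_n^\top}{2})$ produces the two stated formulae.

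The main obstacle, aside from bookkeeping of cross terms, is the justification of the interchange of $\lim_{T\to\infty}$ and $E[\cdot]$ and of the time-average / stationary-expectation identification. This is where one must invoke ergodicity of the asymptotically stationary Gaussian process $X_t$ and boundedness of second moments to control the error in $\tfrac{1}{T}\int_0^T Z^i_t\sigma_n\,dW_t$ and in $\tfrac{1}{T}\int_0^T \phi(u_t,m_t)\,dt-E[\phi(u_\infty,m_\infty)]$ for the quadratic test functions $\phi$ appearing in the drift. Everything else reduces to algebraic simplification using Proposition \ref{prop:asymp_distr}.
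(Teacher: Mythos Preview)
Your proposal is correct and follows essentially the same route as the paper: apply It\^o to $\log V^i_t$, identify the drift as an affine function of $(u_t,m_t,u_tm_t,u_t^2)$ (resp.\ $m_t^2$), pass to the long-run average, and replace time-averages by stationary moments from Proposition~\ref{prop:asymp_distr}. If anything, your treatment of the limiting step---invoking geometric ergodicity of the Gaussian OU process and the $O(1/\sqrt{T})$ bound on the martingale average---is more explicit than the paper's, which simply swaps $E[u_t],E[m_t],\dots$ for their asymptotic values inside the time integral without further comment.
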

\begin{proof}
We show the calculations for \(\Pi_f\), which go similarly for \(\Pi_c\).

First, from \eqref{eq:price_dynamics} we get the dynamics of \(S_t=e^{s_t}\):
\[
\frac{dS_t}{S_t}=\left(\mu-\frac{\sigma_f\sigma_f^\top}{2}+\frac{\sigma_n\sigma_n^\top}{2}+\beta\left(\overline{Z}_t-Z\right)\right)\,dt+\sigma_n\,dW_t
\]
From \eqref{eq:self_financing} we get the dynamics of \(\log{V^f_t}\):
\[
d\left(\log{V^f}\right)_t=Z^f_t\,\frac{dS_t}{S_t}-\frac{\left(Z^f_t\right)^2}{2}\,\frac{d\langle S\rangle_t}{S^2_t}
\]
By using the definitions \eqref{eq:portfolio_fundamental}, \eqref{eq:portfolio_chartist} and \eqref{eq:aggregate_demand}, we get that
\[
d\left(\log{V^f}\right)_t=g\left(u_t,m_t,u_tm_t,u^2_t\right)\,dt+Z^f_t\sigma_n\,dW_t
\]
where \(g:\mathbb{R}^4\mapsto \mathbb{R}\) is an affine function.
From here, by using the law of iterated expectations\footnote{Here we denote the conditional expectation by \(E_t\), i.e. \(E\left[X_T\middle|\mathcal{F}_t\right]\overset{.}{=}E_t\left[X_T\right]\).} and the fact that we know the asymptotic distribution of \(\left(u_t,m_t\right)^\top\) we get that
\begin{multline*}
\Pi_f=\lim_{T\to\infty}E\left[\frac{\log{V_T}-\log{V_0}}{T}\right]=\\
    \begin{aligned}
    &=\lim_{T\to\infty}\frac{1}{T}E\left[\int_0^T\,d\left(\log{V^f}\right)_t\right]
    =\lim_{T\to\infty}\frac{1}{T}E\left[\int_0^TE_t\left[d\left(\log{V^f}\right)_t\right]\right]=
    \\
    &=\lim_{T\to\infty}\frac{1}{T}E\left[\int_0^Tg\left(u_t,m_t,u_tm_t,u^2_t\right)\,dt\right]=
    \\
    &=\lim_{T\to\infty}\frac{1}{T}\int_0^Tg\left(E[u_t],E[m_t],E[u_tm_t],E[u^2_t]\right)\,dt=
    \\
    &=\lim_{T\to\infty}\frac{1}{T}\int_0^Tg\left(E[u_\infty],E[m_\infty],E[u_\infty m_\infty],E[u^2_\infty]\right)\,dt=
    \\
    &=g\left(0,0,E[u_\infty m_\infty],E[u^2_\infty]\right)
\end{aligned}
\end{multline*}
\end{proof}

Proposition \ref{prop:utilities} helps us answer which strategy ``wins'' in the long run:
\begin{proposition}
The difference in long-term logarithmic growth of the two strategies is given by:
\begin{multline}
    \Pi_f-\Pi_c=
    \frac{\sigma_n\sigma_n^\top}{2}\left(\alpha_c^2E\left[m^2_\infty\right]-\alpha_f^2E\left[u^2_\infty\right]\right)+
    \left(p_f-p_c\right)\beta\alpha_f\alpha_cE\left[u_\infty m_\infty\right]+
    \\
    +\beta\left(p_f\alpha_f^2E\left[u^2_\infty\right]-p_c\alpha^2_cE\left[m^2_\infty\right]\right)
    \label{eq:growth_difference}
\end{multline}
\label{prop:growth_difference}
\end{proposition}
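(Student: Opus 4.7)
The plan is to obtain this identity by a direct subtraction of the two formulae supplied by Proposition \ref{prop:utilities}, so no new probabilistic content is required beyond what has already been established. The asymptotic moments $E[u_\infty^2]$, $E[m_\infty^2]$ and $E[u_\infty m_\infty]$ are exactly the quantities appearing on the right-hand side of \eqref{eq:growth_difference}, and Proposition \ref{prop:utilities} has already expressed $\Pi_f$ and $\Pi_c$ as affine functions of these moments plus a common constant $C$. The proof therefore reduces to bookkeeping.

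Concretely, I would write out $\Pi_f - \Pi_c$ term by term. First, the constant $C = Z\bigl(\mu - \tfrac{\sigma_f\sigma_f^\top}{2} + (1-Z)\tfrac{\sigma_n\sigma_n^\top}{2}\bigr)$ cancels, which is the key observation that $C$ does not depend on which strategy is evaluated. Second, I would collect the remaining six terms into three groups corresponding to the three summands in \eqref{eq:growth_difference}: the pair $-\alpha_f^2 \tfrac{\sigma_n\sigma_n^\top}{2}E[u_\infty^2]$ and $+\alpha_c^2 \tfrac{\sigma_n\sigma_n^\top}{2}E[m_\infty^2]$ combine into the first summand; the pair $-\beta p_c\alpha_c\alpha_f E[u_\infty m_\infty]$ and $+\beta p_f\alpha_f\alpha_c E[u_\infty m_\infty]$ factor to give $(p_f-p_c)\beta\alpha_f\alpha_c E[u_\infty m_\infty]$, the second summand; and the pair $+\beta p_f\alpha_f^2 E[u_\infty^2]$ and $-\beta p_c\alpha_c^2 E[m_\infty^2]$ is already the third summand.

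Since this is pure algebraic manipulation with no missing step, there is no genuine obstacle; the only risk is a sign or bookkeeping error during regrouping. I would therefore keep the presentation extremely short, essentially just writing the difference, noting the cancellation of $C$, and pointing to the three term-pairings above. The explicit values of the moments from Proposition \ref{prop:asymp_distr} are not needed here, since the statement is framed in terms of the moments themselves; substituting them would give a closed-form expression only in the model parameters, but that is not what is claimed.
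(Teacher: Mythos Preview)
Your proposal is correct and is exactly the paper's own approach: the paper's proof consists of the single line ``A direct consequence of Proposition~\ref{prop:utilities}'', and your term-by-term subtraction with the cancellation of $C$ and the three regroupings is precisely that direct consequence made explicit.
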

\begin{proof}
A direct consequence of Proposition \ref{prop:utilities}.
\end{proof}

We interpret Proposition \ref{prop:growth_difference} through the effects of population ratios.
While we should not forget that population ratios do affect all the moments present in \eqref{eq:growth_difference}, it seems as if \(\Pi_f-\Pi_c\) is increasing in \(p_f\).
While the exact effect of population ratios is more complex and should be investigated further, for now we claim (as a first order approximation) that whichever trader is more populous on the market, wins in the long run.

\section{Conclusion}

In this paper, we propose a simple linear stochastic model of a financial market populated by heterogeneous agents.
The agents modeled explicitly are fundamental traders (who buy the risky asset when it is below its fundamental value, and sell when it is above) and chartists (who follow a momentum indicator).
A market maker gathers the positions of the two trader types and adjusts the market price accordingly.
Noise traders are also present and provide volatility in market price.

We define market stability through the asymptotic stationarity of the indicators followed by the two traders (price dislocation and momentum).
We prove stability conditions for, and calculate the asymptotic behavior of the market.
We also provide analytic formulae for the profitability of the two trading strategies.
Our preliminary results suggest that whichever trader type is more present in the market achieves higher returns in the long run.




\printbibliography 

\end{document}